\newtheoremstyle{wsc}
{3pt}
{3pt}
{}
{}
{\bf}
{}
{.5em}
{}
\theoremstyle{wsc}
\newtheorem{theorem}{Theorem}
\renewcommand{\thetheorem}{ \arabic{theorem}}
\renewcommand{\thecorollary}{\arabic{corollary}}
\renewcommand{\thedefinition}{\arabic{definition}}
\newtheorem{lemma}{Lemma}
\DeclareMathOperator{\trace}{Tr}
\renewcommand{\vec}[1]{\mathbf{#1}}
\newcolumntype{M}{>{\centering\arraybackslash}m{\dimexpr.25\linewidth-2\tabcolsep}}
\providecommand{\shortcite}[1]{\shortcite{#1}}
\begin{document}

%
%
\WSCpagesetup{Alaeddini and Klein}

\title{APPLICATION OF A SECOND-ORDER STOCHASTIC OPTIMIZATION ALGORITHM FOR FITTING STOCHASTIC EPIDEMIOLOGICAL MODELS}

\author{Atiye Alaeddini\\ 
Daniel J. Klein \\[12pt]
Institute for Disease Modeling \\
3150 139th Ave SE, Building 4\\
Bellevue, WA, 98005, USA\\
}

\maketitle

\section*{ABSTRACT}
Epidemiological models have tremendous potential to forecast disease burden and quantify the impact of interventions.  Detailed models are increasingly popular, however these models tend to be stochastic and very costly to evaluate.  Fortunately, readily available high-performance cloud computing now means that these models can be evaluated many times in parallel. Here, we briefly describe PSPO, an extension to Spall's second-order stochastic optimization algorithm, Simultaneous Perturbation Stochastic Approximation (SPSA), that takes full advantage of parallel computing environments. The main focus of this work is on the use of PSPO to maximize the pseudo-likelihood of a stochastic epidemiological model to data from a 1861 measles outbreak in Hagelloch, Germany. Results indicate that PSPO far outperforms gradient ascent and SPSA on this challenging likelihood maximization problem.

\section{INTRODUCTION}
\label{sec:intro}

In public health, it is critical to have a reasonable understanding of an epidemic disease in order to set pragmatic goals and design highly-impactful and cost-effective interventions. Mathematical models of these epidemiological processes can support decision making by forecasting disease spread in space and time, and by evaluating intervention outcomes many times in-silico before spending valuable resources implementing real-world programs. Recent efforts in computational epidemiology have focused on the design and application of detailed stochastic models that capture physical mechanisms through which disease propagates, along with the statistical fluctuations inherent in complex systems. These stochastic models are readily available, and recent work has focused on applying these models to malaria \shortcite{eckhoff2016impact,eckhoff2013mathematical,marshall2016key,gerardin2016optimal}, HIV \shortcite{bershteyn2016age,eaton2015assessment}, polio \shortcite{mccarthy2016spatial,grassly2006new}, and more.

A central challenge in working with these complex models lies in finding input parameters that result in model outputs closely matching observed data. The stochastic nature of these models means that each set of input parameters maps to a distribution of outcomes, from which each sample (model run) can take several hours to obtain. In the quick-to-evaluate deterministic model setting, almost any classical optimization algorithm, such as steepest descent or Newton-Raphson, can be used to fit the model to data by maximizing the pseudo-likelihood of the parameters given the data. However, care must be taken when applying deterministic methods to stochastic objective functions, as the inherent noise causes unexpected behavior. Some have tried model averaging, wherein the results at a given point are averaged over many replicates to reduce stochastic fluctuations. While this approach mitigates stochasticity, it is expensive and does not resolve the problem. Further, algorithms that evaluate the model one point at a time, like Markov chain Monte Carlo and most optimization algorithms, will simply take too long to converge when applied to these time- and memory-expensive models.

There exist many optimization methodologies for both deterministic and stochastic systems. The steepest descent method \shortcite{Nocedal99} is the most prominent iterative method for optimizing a complex objective function. The gradient-based algorithms, such as Robbins-Monro \shortcite{robbins1951stochastic}, Newton-Raphson \shortcite{froberg1969introduction}, and neural network back-propagation \shortcite{rumelhart1985learning}, rely on direct measurements of the gradient of the objective function with respect to the optimization parameter. But, in many cases the gradient of the loss function is not available. This is a common occurrence, for example, in complex systems, such as the optimization problem given in \shortcite{tsilifis2017efficient}, the exact functional relationship between the loss function value and the parameters is not known, and the loss function is evaluated by measurements on the system or by running simulation. A review on the main areas of optimization via simulation can be found in \shortcite{fu1994optimization,hong2009brief,swisher2000survey}.

Some algorithms have been developed specifically to optimize stochastic black-box cost functions. Of these, the Kiefer-Wolfowitz algorithm \shortcite{kiefer1952stochastic} is perhaps the most well known. This algorithm estimates the gradient from noisy measurements using finite differencing. Another well known stochastic optimization algorithm in the case of high dimensional problems is Simultaneous Perturbation Stochastic Approximation (SPSA), which is an approximation algorithm based on simultaneous perturbation \shortcite{spall1992multivariate}. Later, J.~C.~Spall presented a second-order variant of SPSA \shortcite{spall2000adaptive}. The SPSA algorithm is used extensively in many different areas, e.g.\, signal timing for traffic control \shortcite{ma2013solving}, and some large scale machine learning problems \shortcite{byrd2011use}. The convergence of this algorithm to the optimal value in the stochastic \emph{almost sure} sense makes it suitable in many applications. Despite these advantageous properties, SPSA is a serial algorithm that evaluates the (stochastic) function a few points at a time. Specialized algorithms, like the PSPO algorithm \shortcite{alaeddini2017pspo} we employ here, are thus needed to take advantage of fundamentally-parallel resources like high-performance cloud-based computing. Other researchers have looked at ways of enhancing the convergence of the SPSA algorithm, e.g.\, iterate averaging is an approach aimed at achieving higher convergence rate in a stochastic setting \shortcite{polyak1992acceleration}. Another variant of SPSA in which the parameter perturbations are based on deterministic, instead of random, perturbations was considered in \shortcite{bhatnagar2003two}. Several other modified versions of SPSA were proposed in the literature \shortcite{kocsis2006universal,spall2009feedback}.

Among epidemiological parameter search algorithms, some have focused on finding {\em all} parameter combinations that are consistent with observed data. These algorithms, such as basic sampling with importance sampling, Bayesian History Matching \shortcite{andrianakis2015bayesian}, and Incremental Mixture Importance Sampling \shortcite{wagner2014quantifying} have been used effectively. However, the challenging task of finding all parameter combinations means that these algorithms are only applicable in lower dimensional parameter spaces, typically $20$ or less. Other approaches have focused on finding parameter combinations that achieve a binary goal-oriented outcome, e.g.~disease eradication, at a specified probability \shortcite{roh2014stochastic,klein2014separatrix}. When the population counts are sufficiently large, mean field theory can be used to give an approximate model, popular among researchers who study disease spread in a network \shortcite{alaeddini2016optimal,preciado2013optimal,miller2012edge}. While all of these algorithms are useful in specific situations, none address the challenge of finding the best inputs, especially for models that are stochastic and have many parameters.



Uncertainty in stochastic modeling is either fluctuation-driven (intrinsic) or parameter-driven (extrinsic). Both sources of uncertainty are important to quantify, as the noise distributions can impact results \shortcite{bayati2012influence}, and intervention impact can be sensitive to parameters. To approximate extrinsic uncertainty in an optimization framework, we make use of the Cramer-Rao Bound \shortcite{rao1992information}. This bound is derived from the inverse of the Fisher Information Matrix, which evaluates the expected curvature of the log-likelihood at the point of maximum value. High curvature is indicative of easily identified parameters, and thus a tight error bound.

The main contribution of this paper is the application of PSPO, a novel second-order parallel-friendly algorithm, to fitting a stochastic model to historical measles data from Hagelloch, Germany. The PSPO algorithm we use here, take advantage of parallel cloud-based computing resources, which is appropriate for the specific application described in this paper. We also use the relationship between the number of parallel rounds of computation and error tolerance of the gradient for each iteration in PSPO algorithm to achieve an acceptable convergence rate for this particular application. We demonstrate that PSPO works well for fitting a stochastic spidemiological model and compares favorably to the conventional simultaneous perturbation optimization algorithm (SPSA).

The rest of the paper is structured as follows. In \cref{sec:prelim}, we give a summary of the simultaneous perturbation optimization method. The model under study is introduced in \cref{sec:problem}. A review of the modified stochastic optimization algorithm, called PSPO, is presented in \cref{sec:SPSApp}. The numerical simulations are given in \cref{sec:sims}, and \cref{sec:conclusion} concludes the paper.

\section{PRELIMINARIES: SIMULTANEOUS PERTURBATION STOCHASTIC APPROXIMATION}
\label{sec:prelim}



Consider the problem of maximization of a reward function $L(\theta): \mathbb{R}^p \rightarrow \mathbb{R}$. J. C. Spall proposed an efficient stochastic algorithm called Simultaneous Perturbation Stochastic Approximation (SPSA) \shortcite{spall1992multivariate,spall2000adaptive,spall2009feedback}. The SPSA algorithm efficiently estimates the gradient and the Hessian matrix using finite difference techniques. This algorithm basically consists of two parallel recursions for estimating the optimization parameter, $\theta$, and the Hessian matrix, $H(\theta)$. The first recursion is a stochastic equivalence of the Newton-Raphson algorithm, and the second one estimates the Hessian matrix. The two recursions are
\begin{equation} \label{spsa2}
\begin{aligned}
	&\theta_{k+1} = \theta_{k} + a_k \left( \Pi_{\mathcal{P}}  (\bar{H}_k)  \right)^{-1} \hat{\vec{g}}_k(\theta_k) \,, \\
	&\bar{H}_k =  \frac{k}{k+1} \bar{H}_{k-1} + \frac{1}{k+1} \hat{H}_{k} \,, 
\end{aligned}
\end{equation}
where $a_k$ is a positive scalar factor, $\mathcal{P}$ denotes the set of all negative definite matrices, and $\Pi_{\mathcal{P}} (\cdot)$ is the projection into the admissible set $\mathcal{P}$. Here, $\hat{\vec{g}}_k$ and $\hat{H}_{k}$ are the estimated gradient and Hessian at iteration $k$. The SPSA approach for estimating the $\hat{\vec{g}}_k(\theta)$ and $\hat{H}_k(\theta)$ follows.

Let $\Delta_k \in \mathbb{R}^p$ be vectors of $p$ mutually independent zero-mean random variables satisfying the condition of $E\{\Delta_k^{-1} \}$ be bounded. An admissible distribution is a Bernoulli $\pm 1$ distribution. The two-sided estimate of the gradient at iteration $k$ is given by:
\begin{equation} \label{est_g_1st}
	\hat{\vec{g}}_k(\theta_k) =  \frac{L(\theta+c_k \Delta_k) - L(\theta-c_k \Delta_k) }{2c_k} \Delta_k^{-1} \,, 
\end{equation}
where $c_k$ is a positive scalar. Note that in \eqref{est_g_1st}, $\Delta_k^{-1}$ is the element-wise inverse of $\Delta_k$. For second order SPSA, it is suggested to using a one-sided gradient approximation, given by:
\begin{equation} \label{est_g_1sided}
	\hat{\vec{g}}_k(\theta_k) =  \frac{L(\theta+c_k \Delta_k) - L(\theta)}{c_k} \Delta_k^{-1} \,.
\end{equation}

Now let $\tilde{\Delta}_k \in \mathbb{R}^p$ be vectors of random variables satisfying the same condition of $\Delta_k$. The positive scalars $\tilde{c}_k$ are chosen to be smaller than $c_k$. The following formula gives a per-iteration estimate of the Hessian matrix.
\begin{equation} \label{est_H}
	\hat{H}_k =  \frac{1}{2} \left[ \frac{\delta G_k}{2\tilde{c}_k} \tilde{\Delta}_k^{-1} + \left( \frac{\delta G_k}{2\tilde{c}_k} \tilde{\Delta}_k^{-1} \right)^T \right] \,, 
\end{equation}
where 
\begin{equation} 
	\delta G_k = \hat{\vec{g}}_k(\theta_k + \tilde{c}_k \tilde{\Delta}_k ) - \hat{\vec{g}}_k(\theta_k - \tilde{c}_k \tilde{\Delta}_k ) \,.
\end{equation}

The practical implementation details of this algorithm are given in \shortcite{spall2000adaptive,spall2012stochastic}.

\section{PROBLEM DEFINITION} \label{sec:problem}

In this paper, we formulate our problem for virus spread in a population consists of $N$ individuals. In order to have a reliable predictive model, we need to estimate the model parameters using noisy prevalence data. Two main sources of prevalence data are clinics and health surveys. The clinic data are often available for several time periods. But, health survey data are usually available at only one or a few number of time points. We do not make any assumption on whether the prevalence data is sparse or dense. The output, $\vec{\zeta}$, is the prevalence rates at time points, $t_1, t_2, \cdots, t_n$. To estimate the model parameters, we use maximum likelihood estimate (MLE). Assume the likelihood function of a given $\vec{\theta}$ for a given observation $\vec{\zeta}$ is expressed as:
\begin{equation} 
	l(\vec{\theta} \mid \zeta_1, \zeta_2, \cdots, \zeta_n) = \Pr( \zeta_1, \zeta_2, \cdots, \zeta_n \mid \vec{\theta})  \,,
\end{equation}
where, $\zeta_j$ is the observed prevalence at time $t_j$. Then the log likelihood function is given by 
\begin{equation} \label{costLogReg}
	L(\vec{\theta}) =  \log l(\vec{\theta} \mid \zeta_1, \zeta_2, \cdots, \zeta_n) = \sum_{j=1}^n \log \Pr(\zeta_j \mid \vec{\theta})  \,.
\end{equation}

In order to find the probability, we use a beta binomial model. The beta binomial model is a type of Bayesian model. In our model, the probability of having $I$ infected individuals among $N$ people can be expressed as a binomial distribution
$$ I \mid N, \kappa \sim Binomial(\kappa,N) \,$$
where $\kappa$ is the probability of each individual being infected. Thus the binomial likelihood is
$$ \Pr(I\mid \kappa,N)=\frac{N!}{I!(N-I)!}\kappa^I(1-\kappa)^{N-I} \,.$$
Now, $\kappa$ itself is a probability that can vary over the interval $[0, 1]$, and a Beta distribution is used as the prior
$$ \kappa \mid \beta_1, \beta_2 \sim Beta(\beta_1, \beta_2) \,,$$
then, after observing $I$ infected individuals in a simulation, we have
$$ \kappa \mid I, N, \beta_1, \beta_2 \sim Beta(\beta_1+I, \beta_2+N-I) \,.$$

Using the beta binomial distribution, the probability of observing $\zeta_j$ given modeling parameter $\vec{\theta}$ is
\begin{equation} \label{Pr_zj}
\begin{aligned}
	&\Pr(\zeta_j \mid \vec{\theta}) = \int_0^1 \Pr(\zeta_j\mid \kappa,N) \Pr(\kappa\mid I_j,N) d\kappa =\\
	& \frac{N!}{\zeta_j! (N-\zeta_j)!} \frac{\Gamma(2+N) \Gamma(1+I_j+\zeta_j) \Gamma(1+2N-I_j-\zeta_j)} {\Gamma(1+I_j)\Gamma(1+N-I_j) \Gamma(2+2N)}
\end{aligned}	
\end{equation}
where, $I_j$ comes from running our epidemic model on a population of $N$ individuals for time $t\leq t_j$, and substituting $\theta$ as model parameters. In this paper, the evolution of an epidemic disease can be simulated using Gillespie's stochastic simulation algorithm (SSA) \shortcite{gillespie2001approximate}. In fact, we employ a stochastic compartmental model, in which state transitions resemble chemical reactions, although the PSPO methodology works equally well on individual-based models.  

Finally, the optimal model parameters, $\theta^*$, are obtained by solving the following optimization problem:
$$ \theta^* = \arg\max L(\theta) \,,$$
where $L(\theta)$ is given by \eqref{costLogReg}. 
%
In the next section we present a robust stochastic algorithm to estimate the optimal parameter, $\theta^*$.


In this paper, we are also interested in calculating the confidence region of the estimated optimal solution obtained by our optimization algorithm. The Fisher information matrix is a tool used here to compute the confidence region. Given the information matrix, one can use Cramer-Rao inequality to compute the confidence region of the estimated parameter. Assuming $\theta^*$ is the true value of the parameter, then the uncertainty of the estimated parameter, $\hat{\theta}$, is bounded using the Cramer-Rao inequality as
$$ \Sigma_{\theta}  \geq F^{-1} (\theta^*) \,,$$
where $\Sigma_{\theta}$ is the covariance matrix of $\hat{\theta}$, and $F$ is the Fisher information matrix \shortcite{scharf1991statistical}. In other words, the estimated standard errors of the maximum likelihood estimation are the square roots of the diagonal elements of the inverse of the observed Fisher information matrix. Basically, smaller information matrix means larger variation in the estimated parameter and less confidence in the accuracy of the estimated value.
Having the likelihood function, the $p \times p$ Fisher information matrix is given by:
\begin{equation} \label{fisher_1st}
	F(\theta) \equiv \mathbb{E} \left[ \left. \frac{\partial L(\theta)}{\partial \theta} \cdot \frac{\partial L(\theta)}{\partial \theta^T} \right| \theta \right] \,,
\end{equation}
where $L(\theta) = \log \Pr(\vec{\zeta} \mid \theta) = \log l(\theta \mid \vec{\zeta})$. A well known equivalent form of the fisher information matrix, which is defined based on the assumption of twice differentiability of the log-likelihood function, is:
\begin{equation} \label{fisher}
	F(\theta) \equiv -\mathbb{E} \left[ \left. H(\theta \mid \vec{z}) \right| \theta \right] \,,
\end{equation}
where
$$ H(\theta \mid \vec{\zeta}) \equiv \frac{\partial^2 L}{\partial \theta \partial \theta^T} $$
is the Hessian matrix. The latter form of the Fisher information matrix in \eqref{fisher} is often easier to compute than the former form given in \eqref{fisher_1st}. The analytical calculation of the Fisher information matrix is difficult here because of nonlinearity nature of the the epidemiological model. Thus, we use an approximation method based on a Monte Carlo sampling approach, which is introduced in \shortcite{spall2005monte}. The idea is to estimate the Hessian matrix of the $\log$-likelihood function for a large number of samples and then average the negative of the Hessian matrices to obtain an estimate of the Fisher information matrix. Basically the same approach of simultaneous perturbation used in \eqref{est_H} can be used to estimate the Hessian matrix. A summary of the Monte Carlo sampling approach (from \shortcite{spall2005monte}) is given here.

Let $\vec{z}_k, \ \ k=1,\cdots, K$, be the $k$th Monte Carlo generated sample from running the epidemic disease model using the estimated parameter, $\theta$. To obtain $\vec{z}_k$, we need to run the simulation for the given $\theta$. The pseudo measurement, $\vec{z}_k$, is a realization of the stochastic simulation based on the probability distributions for each node. The Monte Carlo based estimate of the information matrix is given by \shortcite{spall2005monte}:
\begin{equation} \label{ModifiedFisher}
\begin{aligned}
	\bar{F}_{J,k} (\theta) = \frac{k-1}{k} \bar{F}_{J,k-1} (\theta) - \frac{1}{kJ} \sum_{j=1}^J \hat{H}_{j|k} \,, 
\end{aligned}
\end{equation}
where $\hat{H}_{j|k}$ is the $j$th estimate of the Hessian matrix, $H(\theta)$ at the sample data $\vec{z}_k$. The inner averaging loop compute an estimate of the Hessian matrix at a sample data $\vec{z}_k$ using $J$ values of perturbations, $\Delta_1, \Delta_2, \cdots, \Delta_J$. After executing \eqref{ModifiedFisher} $K$ times (for all samples $\vec{z}_k$), the $\bar{F}_{J,K}$ is used as an estimation of the Hessian matrix at $\theta$. The details of the Monte Carlo based Hessian estimation can be found in \shortcite{spall2005monte,spall2008improved}.

\section{PARALLEL SIMULTANEOUS PERTURBATION ALGORITHM: PSPO}  \label{sec:SPSApp}

In case of epidemiological model fitting, the objective function is obtained from \eqref{costLogReg}. In order to compute this log likelihood function, we need to run the SSA simulation given $\theta$ to compute the probability term \eqref{Pr_zj}. Since SSA simulation is a stochastic process, the log likelihood function (the objective function) is a noisy uncertain function. Although SPSA is an efficient optimization algorithm for high dimensional problems, it requires many iterations to converge, particularly in the case of high noise problems. Since all epidemic models are stochastic processes, the values of objective function are highly uncertain. In the case of highly uncertain problems, the computation of the gradient from noisy objective function require more runs to obtain an acceptable estimate. The idea, here, is using parallel computing of the gradient with different perturbation vectors. The multiple rounds of computation guarantees an error bound for each iteration given the noise variance presented in the objective function. The connection of the number of computation rounds and the error in estimated gradient is specified here.

The gradient vector at each iteration can be obtained by averaging over multiple evaluations of \eqref{est_g_1sided} for different values of perturbation. Let $\vec{g}^i$ represents the directional gradient at a point $\theta$ in $\Delta_i$ direction, and $\vec{g}$ is the gradient at this point. Then we know that:
$$ \vec{g}^i = \frac{\left< \vec{g}(\theta), \Delta_i \right>}{\| \Delta_i \|^2} \Delta_i \,.$$
Let $\hat{\vec{g}}^i$ is the estimated gradient using the one-sided gradient approximation \eqref{est_g_1sided} and perturbation vector $\Delta_i$ is a $\pm 1$ vector. Then,
$$ \hat{\vec{g}}^i \approx p \vec{g}^i = \left< \vec{g}(\theta), \Delta_i \right> \Delta_i \approx \left< \hat{\vec{g}}(\theta), \Delta_i \right> \Delta_i \,.$$
Then, we have
$$ \hat{\vec{g}}^T \Delta_i \approx \frac{\delta f_i}{c} \,; \ \ \delta f_i = L(\theta+c\Delta_i) - L(\theta)\,,$$
and then,
\begin{equation} \label{gradLinEq} 
\hat{\vec{g}}^T \Delta \approx \frac{1}{c} \begin{bmatrix} \delta f_1& \delta f_2& \cdots& \delta f_M \end{bmatrix}\,,
\end{equation}
where,
$$ \Delta = \begin{bmatrix} \Delta_1& \Delta_2& \cdots& \Delta_M \end{bmatrix}\,.$$
If $\{\Delta_i\}_{1\leq i\leq M}$ span $\mathbb{R}^p$, then $\Delta \Delta^T$ is invertible. So, the least square estimation of $\hat{\vec{g}}$ is given by:
\begin{equation} \label{MLS_grad}
\hat{\vec{g}} \approx \frac{1}{c} \left( \Delta \Delta^T \right)^{-1} \Delta \begin{bmatrix} \delta f_1 & \delta f_2 & \cdots & \delta f_M \end{bmatrix}^T\,.
\end{equation}

On the other hand, if $M<p$, then \eqref{gradLinEq} is an under-determined system which has non-unique solutions. Then, the solution of the minimum Euclidean norm, $\displaystyle \|\hat{\vec{g}}\|_{2}$, among all solutions is given by:
\begin{equation} \label{MN_grad}
\hat{\vec{g}} \approx \frac{1}{c} \Delta \left( \Delta^T \Delta \right)^{-1} \begin{bmatrix} \delta f_1 & \delta f_2 & \cdots & \delta f_M \end{bmatrix}^T\,.
\end{equation}

The PSP algorithm for estimating the gradient at a given point is given in Algorithm \ref{sp_pp}. In order to improve the efficiency of this algorithm, we can compute $L(\vec{\theta})$ once, out of the loop. Doing this, we need $M+1$ function evaluations for estimating the gradient.

\begin{algorithm}
Inputs: given point $\theta$, perturbation size $c$, and \# of parallel rounds $M$\;
Randomly sample $\Delta_0$ from $\{\pm 1\}$ binary distribution\;
Initialize computation round counter $i:=1$\;
\Repeat{$i \leq M$}{
	$\displaystyle \bar{i} = i \bmod p$\;
 	$\displaystyle \Delta_i := (I-2\vec{e}_{\bar{i}} \vec{e}_{\bar{i}}^T) \Delta_0$\;
	$\displaystyle \delta f_i(\theta) := L(\theta + c \Delta_i) - L(\theta) $\;
	$i := i+1$\;
}
$\displaystyle \Delta := \begin{bmatrix} \Delta_1& \Delta_2& \hdots & \Delta_M\end{bmatrix}$\;
\begin{algorithmic}
	\IF{$M\geq p$}
      		\STATE $\displaystyle \hat{\vec{g}}(\theta) :=\frac{1}{c} \left(\Delta \Delta^T \right)^{-1} \Delta \begin{bmatrix} \delta f_1 & \delta f_2 & \hdots & \delta f_M\end{bmatrix}^T$\;
	\ELSE
		\STATE $\displaystyle \hat{\vec{g}}(\theta) :=\frac{1}{c} \Delta \left(\Delta^T \Delta \right)^{-1} \begin{bmatrix} \delta f_1 & \delta f_2 & \hdots & \delta f_M\end{bmatrix}^T$\;
	\ENDIF
\end{algorithmic}

 \vspace{-1mm}
\NoCaptionOfAlgo
 \caption{Algorithm 1: PSP gradient.} \label{sp_pp}
\end{algorithm}

One difference of the PSP algorithm compared with the conventional SPSA algorithm is on the selection of the perturbation vectors, $\Delta$. We suggest a more efficient strategy to choose independent $\Delta$ vectors. First we sample $\Delta_0$ from a Bernoulli $\pm 1$ distribution. In round $j$, $1 \leq j \leq p$, we switch the sign of the $j$th element in $\Delta_{0}$ to generate $\Delta_{j}$. The same procedure repeats for the next $p$ rounds of computations. In \shortcite{alaeddini2017pspo}, it is proved that this strategy generates a set of perturbation vectors that spans $\mathbb{R}^p$.

We need to provide the number of parallel rounds of computation, $M$, as an input for this algorithm. Let $y=L(\theta)$ and $y^i=L(\theta+c\Delta_i)$ represent the noisy measurements at $\theta$ and $\theta+c\Delta_i$ respectively. Assume these measurements can be expressed as normal random variables as $y \sim \mathcal{N}(\mu,\sigma^2)$ and $y^i \sim \mathcal{N}(\mu_i,\sigma^2)$. It is assumed that the noise variance is constant over the whole state space, $\theta$. 

\begin{lemma} \label{lem_expVal}
Given $y \sim \mathcal{N}(\mu,\sigma^2)$ and $y^i \sim \mathcal{N}(\mu_i,\sigma^2)$, if $M\geq p$, then the expected value of the gradient from Algorithm \ref{sp_pp} is equal to the gradient $\vec{g}$.
\end{lemma}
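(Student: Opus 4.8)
The plan is to condition on the realized perturbation matrix $\Delta = [\Delta_1\ \cdots\ \Delta_M]$ generated inside Algorithm~\ref{sp_pp}, compute the expectation of $\hat{\vec g}(\theta)$ over the measurement noise for that fixed $\Delta$, and then remove the conditioning by the tower property. First I would record the structural facts: since $M\geq p$, the sign-flip construction $\Delta_i=(I-2\vec e_{\bar i}\vec e_{\bar i}^{\,T})\Delta_0$ yields a family $\{\Delta_i\}$ spanning $\mathbb{R}^p$ (the property established in \shortcite{alaeddini2017pspo} and used to derive \eqref{MLS_grad}), so $\Delta\Delta^T$ is invertible and the ``$M\geq p$'' branch returns
$$\hat{\vec g}(\theta)=\frac{1}{c}\left(\Delta\Delta^T\right)^{-1}\Delta\,\vec f,\qquad \vec f:=\begin{bmatrix}\delta f_1 & \cdots & \delta f_M\end{bmatrix}^{T},\quad \delta f_i=y^i-y .$$

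Next I would take expectations. For fixed $\Delta$, the estimator is a \emph{linear} map applied to the vector of finite differences $\vec f$, and the noise enters only through $\vec f$; even though the $\delta f_i$ share the common term $y$ and are therefore correlated, linearity of expectation gives $\mathbb{E}[\vec f\mid\Delta]=[\mu_1-\mu,\ \ldots,\ \mu_M-\mu]^{T}$, using $\mathbb{E}[y]=\mu$ and $\mathbb{E}[y^i]=\mu_i$. The key identification is the first-order relation already invoked just before \eqref{gradLinEq}: the noise-free increment satisfies $\mu_i-\mu = c\,\langle\vec g,\Delta_i\rangle$, i.e.\ $\mathbb{E}[\vec f\mid\Delta]=c\,\Delta^{T}\vec g$. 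Substituting and cancelling,
$$\mathbb{E}[\hat{\vec g}(\theta)\mid\Delta]=\frac{1}{c}\left(\Delta\Delta^T\right)^{-1}\Delta\cdot c\,\Delta^{T}\vec g=\left(\Delta\Delta^T\right)^{-1}\left(\Delta\Delta^T\right)\vec g=\vec g .$$
Since the right-hand side does not depend on $\Delta_0$, taking expectation over $\Delta_0$ preserves the value, so $\mathbb{E}[\hat{\vec g}(\theta)]=\vec g$, as claimed. (If one wishes to also average over a Bernoulli $\Delta_0$ or to keep $\Delta$ deterministic, nothing changes.)

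The step needing the most care is the identification $\mu_i-\mu=c\langle\vec g,\Delta_i\rangle$. A one-sided difference actually carries an $O(c^2)$ term $\tfrac{c^2}{2}\Delta_i^{T}H(\theta)\Delta_i$ that does not cancel, so literally the estimator is unbiased only in the limit $c\to 0$ (equivalently up to an $O(c)$ bias in $\hat{\vec g}$). I would therefore either state the lemma under the standing first-order approximation used throughout \cref{sec:SPSApp} — the cleanest route, consistent with the ``$\approx$'' in \eqref{gradLinEq} — or carry the Hessian remainder explicitly and conclude asymptotic unbiasedness. The only other point worth a sentence is the invertibility of $\Delta\Delta^T$, which is exactly the spanning property of the sign-flip perturbations and can simply be cited. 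Everything else (pulling the deterministic matrix $\tfrac1c(\Delta\Delta^T)^{-1}\Delta$ outside the expectation, and the cancellation $(\Delta\Delta^T)^{-1}(\Delta\Delta^T)=I$) is routine.
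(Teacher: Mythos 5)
The paper gives no proof of this lemma---it simply writes ``Refer to \shortcite{alaeddini2017pspo}''---so there is no in-paper argument to compare against; your conditioning-on-$\Delta$ argument (linearity of expectation applied to $\vec f$, the identification $\mathbb{E}[\delta f_i\mid\Delta]=c\langle\vec g,\Delta_i\rangle$, and the cancellation $(\Delta\Delta^T)^{-1}\Delta\Delta^T=I$) is the standard and correct route, with the spanning property of the sign-flip perturbations correctly cited for invertibility. You are also right to flag that a one-sided difference carries an uncancelled $O(c^2)$ Hessian remainder, so the estimator is unbiased only under the first-order approximation the paper itself adopts via the ``$\approx$'' in \eqref{gradLinEq}; the lemma as literally stated is an idealization, and your explicit acknowledgment of this is a point in the proposal's favor rather than a gap.
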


\begin{proof}
Refer to \shortcite{alaeddini2017pspo}.
\end{proof}

\begin{theorem} \label{thrm_bound}
If the rounds of computation, $M$, satisfies
\begin{equation} \label{Nparallel}
M \geq \frac{\sigma^2 p}{c^2\epsilon^2}\,,
\end{equation}
then the error in estimated gradient from Algorithm \ref{sp_pp} is bounded by $ \mathbb{E} \left[ \| \hat{\vec{g}}- \vec{g}\| \right] \leq \epsilon \,.$
\end{theorem}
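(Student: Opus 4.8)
The plan is to leverage Lemma~\ref{lem_expVal}. In the regime relevant to~\eqref{Nparallel} we have $M\geq p$ (for any $\epsilon\leq\sigma/c$), so Algorithm~\ref{sp_pp} returns the least-squares estimate and, by Lemma~\ref{lem_expVal}, $\mathbb{E}[\hat{\vec{g}}(\theta)]=\vec{g}$. Consequently the mean-squared error equals the trace of the covariance, $\mathbb{E}\big[\|\hat{\vec{g}}-\vec{g}\|^2\big]=\trace\,\mathrm{Cov}(\hat{\vec{g}})$, and by Jensen's inequality $\mathbb{E}\big[\|\hat{\vec{g}}-\vec{g}\|\big]\leq\big(\mathbb{E}\big[\|\hat{\vec{g}}-\vec{g}\|^2\big]\big)^{1/2}$. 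This reduces the theorem to showing $\trace\,\mathrm{Cov}(\hat{\vec{g}})\leq \sigma^2 p/(c^2 M)$ under~\eqref{Nparallel}, after which squaring and rearranging~\eqref{Nparallel} finishes the argument.

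Next I would make the dependence on the noise explicit. Writing the increment as $\delta f_i=(\mu_i-\mu)+\nu_i$, where $\nu_i$ collects the Gaussian measurement error of $\delta f_i$ (variance of order $\sigma^2$), and using the least-squares formula~\eqref{MLS_grad}, $\hat{\vec{g}}=\tfrac1c(\Delta\Delta^T)^{-1}\Delta\,[\delta f_1,\dots,\delta f_M]^T$. To the order retained in Lemma~\ref{lem_expVal}, the deterministic part is exactly $c\,\Delta^T\vec{g}$, and since $\tfrac1c(\Delta\Delta^T)^{-1}\Delta(c\,\Delta^T\vec{g})=\vec{g}$, subtraction leaves the pure-noise term $\hat{\vec{g}}-\vec{g}=\tfrac1c(\Delta\Delta^T)^{-1}\Delta\,\vec{\nu}$. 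Hence $\mathrm{Cov}(\hat{\vec{g}})=\tfrac1{c^2}(\Delta\Delta^T)^{-1}\Delta\,\mathrm{Cov}(\vec{\nu})\,\Delta^T(\Delta\Delta^T)^{-1}$. The shared term $L(\theta)$, common to all $\delta f_i$, contributes a rank-one piece of $\mathrm{Cov}(\vec{\nu})$ that is pushed into the fixed one-dimensional subspace spanned by $\Delta_0$ and contributes only a lower-order term; the remainder satisfies $\mathrm{Cov}(\vec{\nu})\preceq c_0\sigma^2 I$ for an absolute constant $c_0$, giving $\trace\,\mathrm{Cov}(\hat{\vec{g}})\leq \tfrac{c_0\sigma^2}{c^2}\,\trace\big[(\Delta\Delta^T)^{-1}\big]+(\text{l.o.t.})$.

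The crux is bounding $\trace\big[(\Delta\Delta^T)^{-1}\big]$, and here I would exploit the specific construction of the perturbations in Algorithm~\ref{sp_pp}: the $\Delta_i$ are obtained by cyclically flipping single coordinates of $\Delta_0$, so over $M$ rounds each of the $p$ basic directions is reused $m=M/p$ times and $\Delta\Delta^T=m\,G$ for a fixed $\pm1$-Gram matrix $G$ depending only on $\Delta_0$. The spanning property proved in \shortcite{alaeddini2017pspo} makes $G$ invertible, and an explicit eigen-decomposition of $G$ (a lower bound on $\lambda_{\min}(G)$, so that $\lambda_{\min}(\Delta\Delta^T)$ grows linearly in $M$) gives $\trace\big[(\Delta\Delta^T)^{-1}\big]=\tfrac1m\trace[G^{-1}]=\tfrac{p}{M}\trace[G^{-1}]$. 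Folding the construction-dependent constants (the factor $c_0$ and $\trace[G^{-1}]$, together with the normalization of $c$) into the stated bound yields $\trace\,\mathrm{Cov}(\hat{\vec{g}})\leq \sigma^2 p/(c^2 M)$, and combining with the Jensen reduction above and imposing $M\geq \sigma^2 p/(c^2\epsilon^2)$ gives $\mathbb{E}[\|\hat{\vec{g}}-\vec{g}\|]\leq\epsilon$.

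I expect the main obstacle to be exactly this last step: obtaining a clean control of $\trace\big[(\Delta\Delta^T)^{-1}\big]$ for the structured, cyclically repeated (non-i.i.d.) perturbation family — in particular pinning down the correct power of $p$ in the constant — and simultaneously handling the correlated component of $\vec{\nu}$ introduced by reusing $L(\theta)$ across all $M$ increments. A secondary, more routine point is justifying that the Taylor remainder in $\delta f_i=c\langle\vec{g},\Delta_i\rangle+O(c^2)$ is negligible relative to $\epsilon$ (or vanishes in the working model), which is the precise place where Lemma~\ref{lem_expVal} must be invoked rather than re-derived.
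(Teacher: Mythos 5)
The first thing to note is that the paper does not contain a proof of \cref{thrm_bound} at all --- the ``proof'' is a pointer to \shortcite{alaeddini2017pspo} --- so there is no in-text argument to compare yours against; I can only assess your sketch on its own terms. Its skeleton (unbiasedness via \cref{lem_expVal}, Jensen's inequality to pass from $\mathbb{E}\|\hat{\vec{g}}-\vec{g}\|$ to the root mean square, and reduction to $\trace \mathrm{Cov}(\hat{\vec{g}})$) is the natural and almost certainly intended route. The problem is that the step you yourself identify as the crux does not close, and in fact goes the wrong way. By Cauchy--Schwarz on the eigenvalues, $\trace\bigl[(\Delta\Delta^T)^{-1}\bigr]\ \geq\ p^2/\trace\bigl[\Delta\Delta^T\bigr]=p^2/(Mp)=p/M$ (each $\|\Delta_i\|^2=p$), with equality only when $\Delta\Delta^T=MI$. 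So the bound $\trace\bigl[(\Delta\Delta^T)^{-1}\bigr]\lesssim p/M$ you need is attainable only for a perfectly isotropic design, and the paper's cyclic single-coordinate-flip construction is very far from isotropic: all $\Delta_i$ have inner product $p-2$ with $\Delta_0$. Concretely, for $\Delta_0=\mathbf{1}$ the one-cycle Gram matrix is $G=4I+(p-4)\mathbf{1}\mathbf{1}^T$, with eigenvalues $(p-2)^2$ (once) and $4$ (with multiplicity $p-1$), so $\trace[G^{-1}]=(p-2)^{-2}+(p-1)/4\sim p/4$. This is a $p$-dependent factor, not a constant you can ``fold in'': it gives $\trace\bigl[(\Delta\Delta^T)^{-1}\bigr]\sim p^2/(4M)$ and hence a requirement $M\gtrsim \sigma^2p^2/(c^2\epsilon^2)$, one power of $p$ worse than \eqref{Nparallel}. (For i.i.d.\ Bernoulli perturbations, by contrast, $\Delta\Delta^T$ concentrates around $MI$ and your clean $p/M$ bound does hold up to constants; the theorem is plausibly proved in the cited reference under that sampling scheme rather than the deterministic flip scheme.)

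The second unresolved point is the noise covariance. Since $L(\theta)$ is evaluated once and reused in every $\delta f_i$, $\mathrm{Cov}(\vec{\nu})=\sigma^2(I_M+\mathbf{1}\mathbf{1}^T)$, whose largest eigenvalue is $(M+1)\sigma^2$; the bound $\mathrm{Cov}(\vec{\nu})\preceq c_0\sigma^2 I$ with an absolute constant $c_0$ is false. Pushing the rank-one piece through the least-squares map does confine it to the $\Delta_0$ direction, but the resulting variance contribution is attenuated only by the design geometry, not by averaging over $M$, so it leaves a floor that does not vanish as $M\to\infty$ --- which also undercuts the remark following the theorem unless $L(\theta)$ is re-measured each round. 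You flagged both of these as ``the main obstacle,'' which is the right diagnosis, but as written neither is resolved, so the proposal is a correct reduction plus an unproven (and, for the stated perturbation construction, false-as-stated) key estimate rather than a proof.
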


\begin{proof}
Refer to \shortcite{alaeddini2017pspo}.
\end{proof}

It is worthy to note that from the results of \cref{thrm_bound} one can show that if $M \rightarrow \infty$, then the error, $\epsilon \rightarrow 0$, which is equivalent to $ \mathbb{E} \left[ \| \hat{\vec{g}}- \vec{g}\| \right] =0$. Algorithm \ref{conjugateSP} presents a step-by-step summary of the PSPO algorithm, which is a conjugate gradient type algorithm. Details can be found in \shortcite{alaeddini2017pspo}.

\begin{algorithm} 
Initialize $i=0$, $k=0$ and initial guess $\theta_0$, Initialize update direction $\vec{d} =$ null\;
Run PSP to compute $\hat{\vec{g}}_0(\theta_0)$ with $M=1$\;
$\vec{r} := -\hat{\vec{g}}_0(\theta_0)$, \ \ $\vec{d} := \vec{r}$, \ \ $\vec{g}_{new} := \hat{\vec{g}}_0(\theta_0)$\;
\Repeat{stopping criteria satisfied}{
	Set \# of parallel rounds $M$ using error tolerance for this iteration\;
 	Run PSP to compute $\hat{\vec{g}}_k(\theta_k)$ using computed $M$\;
	$\displaystyle \vec{r} := -\hat{\vec{g}}_k(\theta_k)$, \ \ $\displaystyle \vec{g}_{old} := \vec{g}_{new}$, \ \ $\displaystyle \vec{g}_{new} :=\hat{\vec{g}}_k(\theta_k)$\;
	$\displaystyle \beta = \frac{\vec{g}_{new}^T(\vec{g}_{new}-\vec{g}_{old})}{\vec{g}_{old}^T \vec{g}_{old}}$\;
	
	$\displaystyle \tilde{\vec{d}} := \frac{\vec{d}}{\|\vec{d}\|} + \varepsilon \mathbbm{1}_{\vec{d}=0}(\vec{d})$\;
	$\theta^+_k := \theta_k + \tilde{\vec{d}}$, \ \ $\theta^-_k := \theta_k - \tilde{\vec{d}}$\;
 	Run PSP to compute $\hat{\vec{g}}(\theta^+_k)$, and $\hat{\vec{g}}(\theta^-_k)$, and compute $\displaystyle \delta G_k = \hat{\vec{g}}(\theta^+_k) - \hat{\vec{g}}(\theta^-_k)$\;
	$\displaystyle \hat{H}_k :=\frac{1}{2} \left[ \frac{\delta G_k}{2} \tilde{\vec{d}}^{-1} + \left( \frac{\delta G_k}{2} \tilde{\vec{d}}^{-1} \right)^T \right]$\;
	
	$\displaystyle \alpha :=  -\frac{\hat{\vec{g}}_k^T \vec{d}}{\vec{d}^T \hat{H}_k \vec{d}}$\;
	Update $\displaystyle \theta_{k+1} :=\theta_{k} +\alpha \vec{d}$, and update direction $\displaystyle \vec{d} :=\vec{r}+\beta \vec{d}$\;
	$i := i+1$\;
	\begin{algorithmic}
	\IF{$i =p$ or $\vec{r}^T\vec{d} \leq 0$}
      		\STATE $\displaystyle \vec{d} := \vec{r}$\;
		\STATE $\displaystyle i := 0$\;
	\ENDIF
	\end{algorithmic}
	$k := k+1$\;
}
 \vspace{0mm}
 \NoCaptionOfAlgo
 \caption{Algorithm 2: PSPO Algorithm.} \label{conjugateSP}
\end{algorithm}

Running the epidemiological simulations, particularly the computation of likelihood function for a complex model, is usually an expensive task. Therefore, it is more desirable to have a robust solution, that we do not require to re-run the whole simulation with any change in the model. On the other hand, we would like to have a region of all possible solutions. Then we would be able to sample from the solution region and only keep the parameters that often produce good runs for the given observation. In order to estimate the solution region, we use the Monte Carlo based approach, introduced in \shortcite{spall2005monte}, and described earlier in this paper. At this point, we have all required tools to find the best predictive model from observed clinical data. Algorithm \ref{SPSA_pp} presents a step-by-step summary of the proposed approach. 
\begin{algorithm} 
Inputs: clinical data, $\zeta_1, \zeta_2, \cdots, \zeta_n$\;
Initialize with given guess $\theta_0$\;
Run PSPO algorithm until the stopping criteria satisfied\;
$\theta_{est} := \theta_k$ obtained from PSPO\;
Compute $\displaystyle \bar{F}(\theta_{est})$ using \eqref{ModifiedFisher}\;
$\displaystyle \Sigma_{\theta_{est}} = \bar{F}^{-1}(\theta_{est})$\;
Sample from $\theta_{est}$ and $\Sigma_{\theta_{est}}$ to find a desirable set of parameters\;
\vspace{3mm}
\NoCaptionOfAlgo
\caption{Algorithm 3: Epidemic model calibration algorithm.} \label{SPSA_pp}
\end{algorithm}

\section{SIMULATION} \label{sec:sims}

To demonstrate the main contribution of this paper, we implemented the methodology on Hagelloch data set. This data set concerns a measles outbreak in a small town in Germany in 1861, which contains 188 infected individuals. This data set is very popular in literature because of its completeness and depth of data. \cref{Measles1861} gives the observed clinical data. The data are obtained from \shortcite{meyer2014spatio}.
%
%
%

\begin{figure}[!htb]
        \centering
        \includegraphics[width=.45\linewidth]{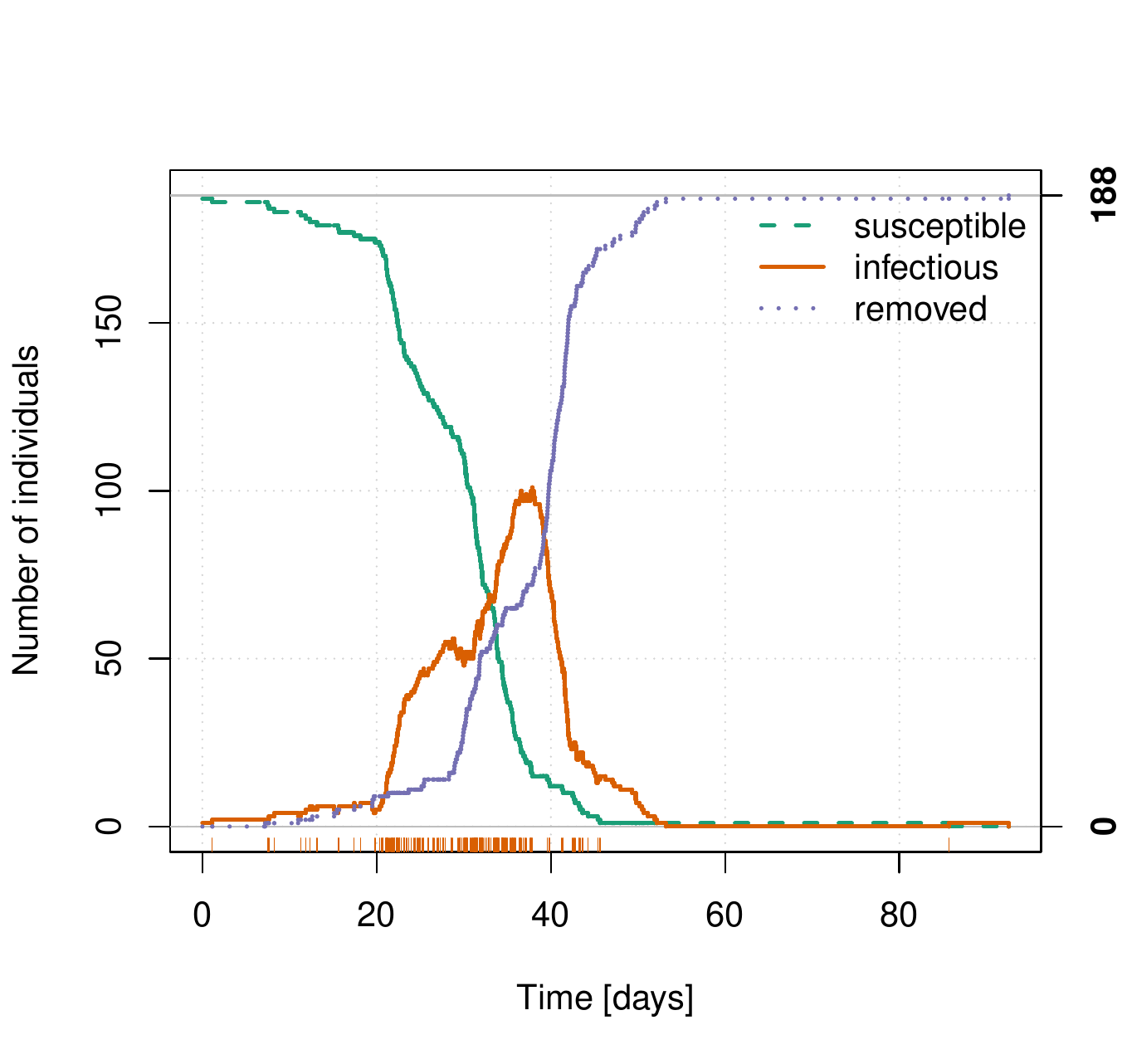}
	\vspace{-5mm}
        \caption{Measles outbreak clinical data.} \label{Measles1861}
\end{figure}

Some types of diseases with permanent immunity, such as measles, mumps and rubella, can be described as Susceptible-Infected-Recovered (SIR) model. In this model each individual can only exist in one of the discrete states such as susceptible (S), infected (I) or permanently recovered (R). We have two transitions in this case. An infected person can infect others with an infection rate, $\beta$, and is cured with curing rate, $\delta$. Therefore, the parameters of our model are $ \theta = (\beta, \delta)$. Our objective, here, is finding a good model for the given epidemic data, i.e. infection rate and curing rate, to investigate the properties of the disease spread. These properties will allow the researchers to learn about the diseases, and thereby enabling them to test competing theories about transmission of disease and to devise better containment strategies.


We compare the efficiency of the PSPO algorithm with the conventional SPSA algorithm. To compare these two algorithms, we run the optimization algorithms 100 times for each one. \cref{hist19} shows the number of iterations required to converge. As it can be seen in this figure, PSPO algorithm in average needs fewer iterations compared with the conventional SPSA algorithm. The maximum number of iterations for both algorithms set to 30. Note that the number of iterations until convergence set to 30 if the algorithm does not converge after 30 iterations. In order to investigate the performance of the PSPO algorithm as the number of parallel rounds increases, we run the PSPO for this example with different number of parallel computation rounds. \cref{fig:conv_ver_M} shows the number of iterations for convergence as $M$ grows.
\begin{figure}[!h]
\centering
\begin{subfigure}{.5\textwidth}
  \centering
  \includegraphics[width=\linewidth]{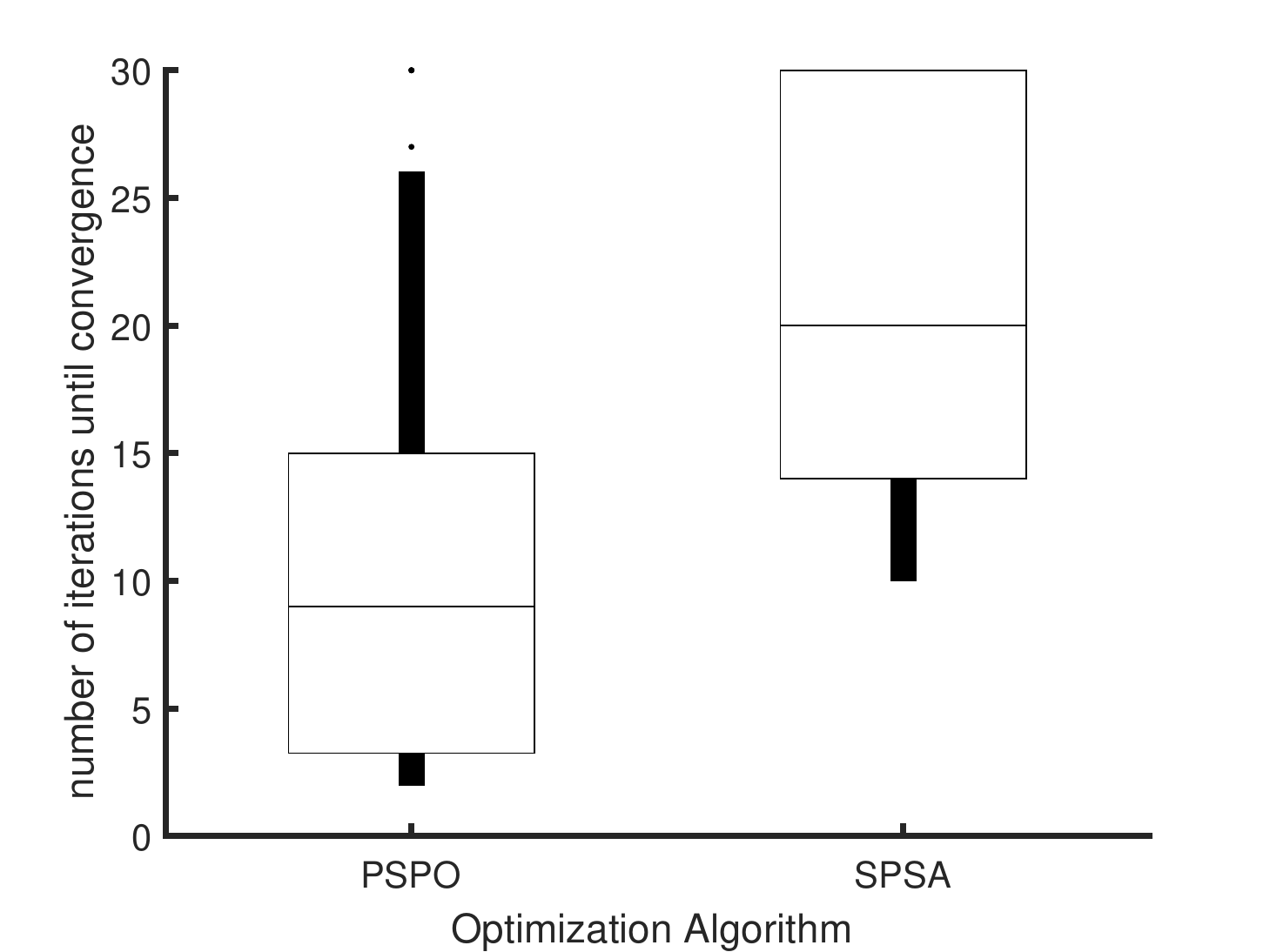}
  \caption{} \label{hist19}
\end{subfigure}%
\begin{subfigure}{.5\textwidth}
  \centering
  \includegraphics[width=\linewidth]{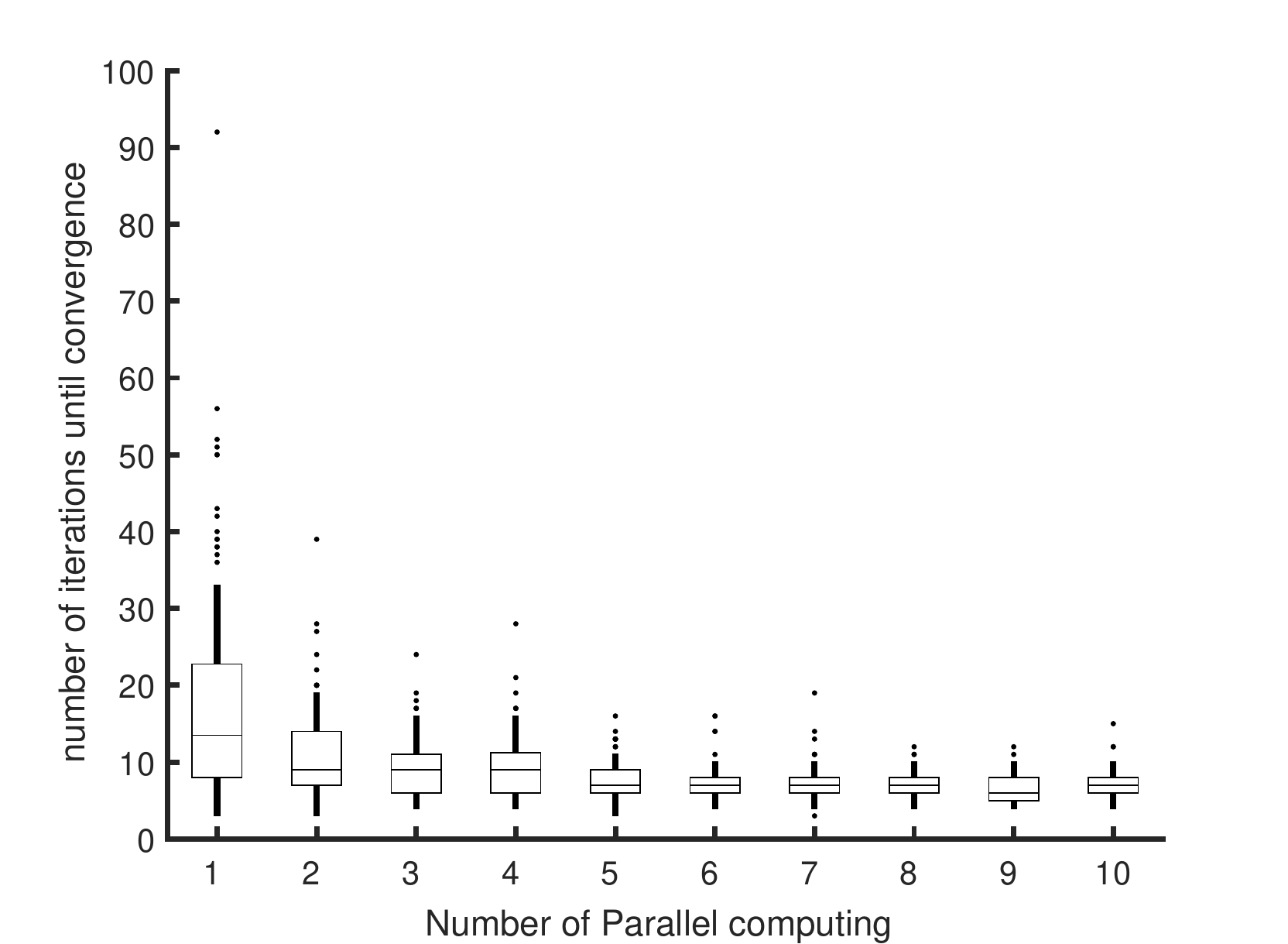}
\caption{}
\label{fig:conv_ver_M}
\end{subfigure}
\caption{(a) Comparison of the number of iterations for convergence for conventional SPSA and PSPO. (b) Number of iterations vs. number of parallel computing rounds, $M$, in PSPO.}
\label{fig:test}
\end{figure}

The comparison between SPSA and PSPO algorithm demonstrates the superiority of the PSPO algorithm in the number of iterations for convergence. Note that, comparing with SPSA, the PSPO algorithm with multiple parallel computations requires additional processing time per iteration. In order to prevent unnecessary long processing time per iteration, which results in long total processing time, the number of parallel computations need to be carefully computed using the level of the noise presented in the problem. In some cases when we are dealing with a high SNR (signal to noise ratio) problem, a single round of computation might be enough for convergence, but in some other applications, like most of the epidemiological applications, the data are highly noisy, and SPSA requires too many iterations to converge.

\section{CONCLUSIONS} \label{sec:conclusion}

In many stochastic optimization problems that noise presents, obtaining an analytical solution is hardly possible. In this paper, we described an algorithm for optimal parameter estimation in discretely observed stochastic epidemic model. We presented an efficient algorithm which explores the parameters space to find a set of parameters which gives highest conditional probability of the given noisy measurements. The log likelihood function, which was used to evaluate the cost corresponding to different choices of parameters, was estimated by running the Gillespie stochastic simulation algorithm and then using the beta binomial probabilistic model. We developed a parallel stochastic perturbation optimization algorithm and found the minimum number of parallel rounds of computation in order to have a bounded error in the estimated gradient. Furthermore, using a Monte Carlo re-sampling method, we computed the fisher information matrix, which gives the confidence region of the obtained solution. 

\section*{ACKNOWLEDGMENTS}
The authors thank Bill and Melinda Gates for their active support and their sponsorship through the Global Good Fund. The authors would also like to thank Philip Eckhoff for his helpful comments on the paper.

\bibliographystyle{wsc}
\bibliography{citations}

\section*{AUTHOR BIOGRAPHIES}

\noindent {\bf ATIYE ALAEDDINI} is a Postdoctoral Research Scientist at the Institute for Disease Modeling, Applied Math team. She holds a Ph.D. in Aerospace engineering from University of Washington and a M.Sc. in Computer engineering from University of California, Santa Cruz. Her research interests include stochastic optimization, control theory, online optimization algorithms, networked systems and their applications in epidemiology. As a member of IDM’s research team, she is developing a statistical model and optimization algorithm for model calibration and parameter space exploration. Her e-mail address is \email{aalaeddini@idmod.org}.\\

\noindent {\bf DANIEL KLEIN} is Sr. Research Manager at the Institute of Disease Modeling. He holds both Ph.D. and M.Sc. in Aerospace engineering from University of Washington. As the head of the Applied Math team at IDM’s research team, his time is split between developing the HIV model and building research tools. His HIV work focuses on building the contact network using ideas from feedback control to guide the dynamical relationship formation process. He is exploring structural assumptions in HIV modeling, HIV model calibration, the impact of interventions, and parameter sensitivity. His email address is \email{dklein@idmod.org}. \\

\newpage

\end{document}